\newtheorem{theorem}{Theorem}
\newtheorem{lemma}[theorem]{Lemma}
\newcommand{\Zset}{\mathbb Z}
\newcommand{\density}{\mathrm D}
\title{A Computer-Assisted Proof of the Optimal Density Bound \\ for Pinwheel Covering\thanks{This work is supported by ISHIZUE 2025 of Kyoto
University, JSPS KAKENHI Grant Numbers JP22H05001 and JP24K02901, and JST ERATO Grant Number JPMJER2301, Japan.
A preliminary announcement \cite{KK25} of this work was made at the Scheduling Symposium 2025 organized by the Scheduling Society of Japan.}}
\author{
Akitoshi Kawamura\thanks{Research Institute for Mathematical Sciences, Kyoto University (\texttt{kawamura@kurims.kyoto-u.ac.jp})}
\and Yusuke Kobayashi\thanks{Research Institute for Mathematical Sciences, Kyoto University (\texttt{yusuke@kurims.kyoto-u.ac.jp})}
}
\date{}
\begin{document}

\maketitle 

\begin{abstract}
In the covering version of the pinwheel scheduling problem, a daily task must be assigned to agents under the constraint that agent $i$ can perform the task at most once in any $a_i$-day interval. In this paper, we determine the optimal constant $\alpha^* = 1.264\ldots$ such that every instance with $\sum_{i} 1 / a_i \ge \alpha^*$ is schedulable. This resolves an open problem posed by Soejima and Kawamura (2020). 
Our proof combines Kawamura's (2024) techniques for the packing version with new mathematical insights, along with an exhaustive computer-aided search that draws on some ideas from G\k asieniec, Smith, and Wild (2022).
\end{abstract}

\section{Introduction}
\label{sec:intro}

There is a task that must be performed every day, and it needs to be distributed among $k$ agents. 
Each agent $i \in [k] = \{1, \ldots, k\}$ has an associated number $a_i$, 
called its \emph{period}, 
and may be assigned the task at most once in any interval of $a_i$ consecutive days. 
Under this constraint, we want to find a schedule 
that allows the task to be performed indefinitely. 
This problem is known as the \emph{covering version of pinwheel scheduling}, 
or simply \emph{pinwheel covering} \cite{KKK25,KL25}. 
It was originally introduced as 
\emph{point patrolling} \cite{KS20}, 
but here we prefer the terminology reflecting a contrast with the 
\emph{packing} version of pinwheel scheduling \cite{Hol}, 
in which the constraint is that each agent $i$ must be assigned the task 
\emph{at least} once in any $a_i$-day interval.

Formally, an instance of pinwheel covering
is a nonempty array $(a _i) _{i \in [k]}$ of positive integers,
which we assume to be arranged in non-decreasing order,
and we seek to find a \emph{schedule} $S \colon \Zset \to [k]$ 
(with $S (t)$ specifying the agent that works on day $t$)
satisfying, for all $i \in [k]$, the following 
\emph{frequency condition},  
where $S^{-1}$ denotes the inverse image of $S$:
\begin{quote}
$\bigl| [m, m + a _i) \cap S ^{-1} (i) \bigr| \leq 1$
for each $m \in \Zset$.
\end{quote}
An instance for which a schedule exists is said to be 
\emph{covering-schedulable}
(or simply \emph{schedulable}).
For example, $(2, 2)$, $(2, 4, 8, 8)$, and $(3, 5, 5, 5, 7)$ are
schedulable, 
with a schedule $S$ for the last one given by
\begin{equation*}
S (t) =
\begin{cases}
 1 & \text{for } t \equiv 0, 3, 7, 10, 14, 17,\\
 2 & \text{for } t \equiv 1, 6, 11, 16,\\
 3 & \text{for } t \equiv 2, 8, 13, 18,\\
 4 & \text{for } t \equiv 4, 9, 15, 20,\\
 5 & \text{for } t \equiv 5, 12, 19 \pmod{21}. 
\end{cases}
\end{equation*}

An easy necessary condition 
for an instance $A = (a _i) _{i \in [k]}$ to be schedulable
is that its \emph{density}
\begin{equation*}
\density (A) = \sum_{i \in [k]} \frac{1}{a _i}
\end{equation*}
be at least $1$, 
because $\density (A)$ is the amount of workforce per day 
that the $k$ agents can provide.
This condition is not sufficient: 
$(2, 3, 5)$ is unschedulable, 
though $\frac 1 2 + \frac 1 3 + \frac 1 5 \geq 1$. 
In fact, these three agents cannot even keep performing the task for just eight days:
whichever of the eight days the period-$5$ agent covers, 
there are some four consecutive days left, 
which cannot be covered by the other two agents. 
This argument can be applied recursively, 
so that for every $k$, 
the instance $(2 ^{i - 1} + 1) _{i \in [k]}$ is unschedulable, 
even for $2 ^k$ days \cite[Theorem 17]{KS20}. 
The density of this instance approaches
\begin{equation*}
\alpha^* = \sum_{i=1}^{\infty} \frac{1}{2^{i-1}+1} = 1.264 \ldots
\end{equation*}
as $k \to \infty$. 
It has been suspected \cite[Conjecture 18]{KS20} 
(and remained open \cite[Section 5]{Kaw25}) that 
these are the densest unschedulable instances. 
We confirm this conjecture: 

\begin{theorem}\label{thm:main}
Every instance $A$ consisting of positive integers and satisfying $\density (A) \ge \alpha^*$
is covering-schedulable. 
\end{theorem}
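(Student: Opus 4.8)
The overall plan is to reduce the infinitely many instances with $D(A) \ge \alpha^*$ to an explicit \emph{finite} family $\mathcal{F}$ by a chain of schedulability-preserving reductions, and then verify by computer that every member of $\mathcal{F}$ admits a schedule. I would first assemble a toolbox of \emph{safe} operations $A \to A'$ --- meaning that schedulability of $A'$ implies schedulability of $A$. Three are central. \emph{Absorption}: appending an agent is safe, since a periodic schedule for $A'$ can be unrolled until its period exceeds the new period and one residue class then reassigned to the new agent; hence it already suffices to schedule \emph{some} sub-multiset of $A$. \emph{Halving merge}: replacing two agents of period $a$ by one agent of period $\lceil a/2 \rceil$ is safe, because any set whose consecutive gaps are $\ge \lceil a/2 \rceil$ splits, by taking alternate elements, into two sets whose gaps are $\ge a$. \emph{Level-peeling}: if $a_1 = 2$, assign every other day to agent $1$; on the remaining days each agent $i \ge 2$ is then constrained exactly like an agent of period $\lceil a_i/2 \rceil$ on $\Zset$, so $A$ is safely reducible to $(\lceil a_2/2 \rceil, \dots, \lceil a_k/2 \rceil)$.

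The reduction step then runs as follows. If $a_1 = 1$ the instance is trivially schedulable; if $a_1 \ge 3$, a short case analysis on the first few periods (creating a period-$2$ agent via a merge, or dispatching a bounded list of exceptional prefixes directly) brings us to $a_1 = 2$, whence level-peeling applies and strictly decreases the number of agents while roughly halving every period. Iterating, the periods shrink to $O(1)$; once many agents share a period --- which pigeonhole forces when all periods are small --- halving merges cut their number down until the periods are pairwise distinct and bounded by an explicit $T$. Only finitely many instances survive: those with distinct periods at most $T$ and density at least an explicit $\alpha_0$, which are exactly the instances clustered around the extremal family $B_k$. The delicate point throughout is the density bookkeeping for level-peeling: from $D(A) = \tfrac12 + \sum_{i \ge 2}\tfrac{1}{a_i} \ge \alpha^*$ one gets residual density $\sum_{i \ge 2}\tfrac{1}{\lceil a_i/2\rceil} = 2\sum_{i\ge2}\tfrac1{a_i} - \sum_{i\ge2}\bigl(\tfrac2{a_i}-\tfrac1{\lceil a_i/2\rceil}\bigr) \ge 2\bigl(\alpha^* - \tfrac12\bigr) - (\text{rounding loss})$, and although $2\alpha^* - 1 > \alpha^*$ leaves slack in principle, near the boundary the rounding loss can consume all of it, so one must track the recursion itself --- not merely the density --- to see that the residual remains schedulable.

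For the finitely many instances in $\mathcal{F}$, schedulability is decidable by a bounded search. Capping, for each agent $i$, the number of days since it was last served at $a_i$ yields a finite state set of size $\prod_i a_i$; a schedule exists iff the associated transition digraph --- a transition picks the next agent to serve and is legal only when that agent is currently available and no agent is forced past its deadline --- contains a directed cycle reachable from the initial state, equivalently a periodic schedule. Following G\k asieniec, Smith and Wild, I would search for such a cycle and output the schedule it encodes, pruning with the safe operations above (only reduction-minimal instances need be tried), with the symmetry among equal periods, and with the binary ``ruler'' structure that organizes the near-extremal instances. The search returns an explicit periodic schedule for every member of $\mathcal{F}$; together with the reductions this proves Theorem~\ref{thm:main}, while the instances $B_k$ show the constant cannot be lowered.

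I expect the principal obstacle to be the density accounting in the reduction to $\mathcal{F}$. Rounding each period up to a power of two --- the first thing one tries --- loses far too much density, and the halving merge is lossy precisely on odd periods, so a naive argument breaks down exactly where the theorem is tight. Closing this gap, where input beyond the packing toolkit is needed, calls for a finer amortized analysis that follows how rounding losses are spread across the levels of the recursion and that harvests extra slack whenever the instance departs from the $B_k$ pattern (two small periods coinciding, or $a_1 > 2$, and so on), thereby fixing $T$ and $\alpha_0$ so that $\mathcal{F}$ is simultaneously finite and entirely schedulable. A second, genuinely practical obstacle is keeping the search feasible: the state space $\prod_i a_i$ grows quickly, so without a good enumeration order, aggressive pruning, and reuse of partial computations it would be intractable, and the search --- or the certificates it emits --- must be structured to be independently verifiable.
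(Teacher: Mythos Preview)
Your high-level plan---safe reductions to a finite family plus computer search---matches the paper's, and you correctly locate the crux in the density bookkeeping under level-peeling. But you do not solve that crux, and your proposed direction of reduction runs opposite to the paper's in a way that makes the difficulty worse. You aim to \emph{create} the situation $a_1 = 2$ (via merges or ad~hoc handling of prefixes) and then level-peel repeatedly; the paper instead treats $a_1 \ge 3$ as the tractable base case. There, the real-valued fold lemma (Lemma~\ref{lem:fold*}) reduces to periods in $\{3,\dots,9\}\cup[10,20]$ at the cost of only $\tfrac1{10}$ in density, and a finite computer check (Lemma~\ref{lem:04}) closes that case. You never invoke this fold lemma, and without it your route to bounded periods is not grounded.

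The step you flag as the ``principal obstacle'' is precisely where the paper's new idea lives, and it is a specific structural argument rather than an amortized bound. Given an unschedulable integral $A$ with $a_1 = 2$, the paper does \emph{not} peel one level at a time. It lets $p$ be the least index such that $A$ has no period in $I_{p+1} = [2^p+1,\,2^{p+1}]$, peels all $p$ levels at once to form $B = (\lceil a_{p+1}/2^p\rceil,\dots,\lceil a_k/2^p\rceil)$, and proves two facts: first, $b_1 \ge 3$, because the gap forces $a_{p+1} \ge 2^{p+1}+1$; second, $D(B) \ge \alpha^*$ \emph{exactly}, via a short case analysis on $p\in\{0,1,2,\ge 3\}$ that uses $a_i \ge 2^{i-1}+1$ for $i\le p$ together with the estimate $b_i \le 3a_{p+i}/(2^{p+1}+1)$. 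The gap is what tames the rounding loss. By contrast, peeling a single level when $a_2\in\{3,4\}$ recreates $a_1=2$ with strictly smaller density and, iterated, can drive you onto the unschedulable family $B_k$; a one-step inductive argument therefore cannot close, and ``tracking the recursion itself'' is exactly what the paper replaces by the gap trick. (Two minor side notes: in covering, appending an agent is trivially safe---the new agent need never work, so no unrolling is needed---and there are no ``deadlines'' in the state-transition graph; both slips are packing intuitions leaking in.)
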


The explicit mention of the integrality of periods is 
because we will later extend the definitions 
and consider real-valued (non-integer) periods. 
This is an adaptation of the idea 
used in the recent proof by Kawamura~\cite{Kaw24} 
of an analogous optimal bound 
for the packing version of the problem, 
which we will review briefly below (Theorem~\ref{thm:mainpack}).

Like this packing version, 
our proof of Theorem~\ref{thm:main} involves 
exhaustive computer search for schedules of finitely many instances. 
Thus, the rest of the paper consists of 
the theoretical part (Section \ref{sec:mainthm}), 
which proves why checking this finite set of instances suffices, 
and the experimental part (Section \ref{sec:computation})
describing the techniques we used to schedule this finite but huge set of instances. 

\subsection*{Related Work: Packing Version}

As we mentioned, a better-studied problem is 
the packing version of pinwheel scheduling~\cite{Hol}, 
where we are given $k$ recurring tasks and a positive integer $a_i$ for each task $i \in [k]$, 
and we want to select one task to perform each day so that each task $i$ is performed at least once every $a_i$ days.
Formally, an instance $A=(a_i) _{i \in [k]}$ 
is called \emph{packing-schedulable} if there is 
$S \colon \Zset \to [k]$ that satisfies the following for all $i \in [k]$: 
\begin{quote}
$\bigl| [m, m + a _i) \cap S ^{-1} (i) \bigr| \geq 1$
for each $m \in \Zset$.
\end{quote}
A packing-schedulable instance must have density at most $1$. 
Conversely, there has been extensive research on sufficient conditions on the density
that guarantee packing-schedulability. 
Building on prior work~\cite{CC93, FL02, GSW22, LL97}, Kawamura~\cite{Kaw24} recently 
established the optimal threshold:

\begin{theorem}[Kawamura\mbox{\cite[Theorem 1]{Kaw24}}]\label{thm:mainpack}
Every instance $A$ consisting of positive integers and satisfying $\density (A) \le \frac{5}{6}$ is packing-schedulable. 
\end{theorem}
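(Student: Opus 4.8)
The plan is to prove the statement directly: rule out the existence of an unschedulable instance of density at most $5/6$, by compressing the infinite space of instances down to a finite, explicitly described family of ``critical'' instances and then certifying by computer that each of them admits a packing schedule. Throughout I would use the standard compactness fact that a pinwheel instance is packing-schedulable iff it admits an eventually periodic schedule, and in fact a periodic one of period bounded in terms of the $a_i$ (the state space ``days-since-last-served'' is finite); this is what turns the computer part into a search over finitely many finite objects. I would also use freely the monotonicity principle that replacing some $a_i$ by a larger value can only help.

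First I would set up a minimal-counterexample framework: suppose $A=(a_i)_{i\in[k]}$ is unschedulable with $D(A)\le \tfrac56$, chosen minimal in a suitable well-order (fewest agents, then, say, lexicographically smallest in non-decreasing order). The job of the theoretical part is to extract from minimality enough structure to bound $k$, the set of values $\{a_i\}$, and the spread $\max_i a_i/\min_i a_i$, leaving only finitely many $A$. The instances to keep in mind are the tight unschedulable families such as $(2,3,M)$ for large $M$, of density $\tfrac56+\tfrac1M\to\tfrac56$, which both show the constant is optimal and which the structural analysis must ``just barely'' exclude.

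The heart of the argument, and the step I expect to be the \textbf{main obstacle}, is this reduction, because the obvious way to make periods combinatorially tame --- rounding each $a_i$ down to a power of $2$ --- preserves unschedulability but can nearly double the density, far too lossy to land on the sharp constant. Here I would adopt Kawamura's device of \emph{real-valued periods}: extend the instance space to arrays of reals $a_i\ge 1$ with a schedulability notion that agrees with the integer one on integer arrays, remains monotone in the $a_i$, and is continuous enough that the set of unschedulable arrays on a bounded number of agents is closed, so a worst-case sequence of instances has a genuine limit point. One then argues that such a limiting unschedulable array must resemble the limit of the $(2,3,M)$-type family and therefore has density strictly above $\tfrac56$, contradicting $D(A)\le\tfrac56$; equivalently, the real-period deformation argument is what certifies that, up to rescaling, only finitely many combinatorial types of candidate counterexamples survive, with periods drawn from an explicit finite set such as $\{2,3,4,6,8,12,\dots\}$ up to a computed bound. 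Isolating exactly the right structural lemmas --- merging or splitting agents with equal periods, controlling how a heavy short-period agent constrains the residual timeline, bounding the range of periods --- so that the threshold comes out as $\tfrac56$ rather than $\tfrac56-\varepsilon$ is the delicate ``new mathematical insight'' of the proof.

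Finally, with the finite family $\mathcal{L}$ of critical instances in hand, I would carry out the exhaustive search: for each $A\in\mathcal{L}$, look for a periodic packing schedule, reformulated as placing on a cycle of a well-chosen circumference $T$ at least $\lceil T/a_i\rceil$ slots for each agent $i$, with consecutive same-agent slots within arc-distance $a_i$ and at most one agent per slot. To keep this tractable despite $\mathcal{L}$ being finite but huge, I would borrow the G\k asieniec--Smith--Wild techniques --- symmetry breaking, a smart choice of $T$, incremental construction with agents ordered by period, aggressive pruning (or a reduction to SAT) --- and record an explicit schedule for each instance so the certificate is independently checkable. Together with the reduction, exhibiting a schedule for every instance of $\mathcal{L}$ establishes that every instance with $D(A)\le\tfrac56$ is packing-schedulable.
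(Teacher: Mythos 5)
Your high-level framework does match the outline the paper gives for Kawamura's proof: extend packing-schedulability to real-valued periods, reduce to a finite family of bounded-period instances, and certify that family by exhaustive computer search using the cycle-in-the-state-transition-graph criterion with G\k{a}sieniec--Smith--Wild-style pruning. You also correctly diagnose why the naive ``round down to a power of two'' reduction is too lossy, and you correctly identify the tight family $(2,3,M)$ that makes the constant $\tfrac56$ optimal.

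The gap is in the one step you yourself flag as the main obstacle: the reduction to a finite family. The paper's route (Lemma~4 of \cite{Kaw24}, restated in Section~\ref{sec:packing}) is an explicit \emph{folding} construction: from any packing-unschedulable $A$ and any integer $\theta\ge 1$ it manufactures a packing-unschedulable $B$ with every period in $\{1,\dots,\theta-1\}\cup[\theta,2\theta]$ and $D(B)<D(A)+1/(2\theta)$. Taking $\theta=11$ and combining with the computer-verified fact that every instance with periods in $\{1,\dots,10\}\cup[11,22]$ and density $<\tfrac56+\tfrac1{22}$ is schedulable finishes the proof. You instead propose a minimal-counterexample/compactness argument: fix $k$, claim the set of unschedulable real-period arrays is closed, take a limit point, and argue it must look like the limit of $(2,3,M)$. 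This does not work as stated. First, the period space is not compact, and the set of unschedulable arrays is \emph{not} closed under periods tending to infinity: your own family $(2,3,M)$ is unschedulable for every finite $M$, but any sensible limit drops the vanishing constraint and gives $(2,3)$, which \emph{is} packing-schedulable. Second, the number of agents $k$ is a priori unbounded for $D(A)\le\tfrac56$ (e.g.\ many agents with large periods), so you cannot restrict to a single fixed-$k$ slice. The folding lemma is exactly the device that solves both problems at once --- it caps the periods, which together with the density bound caps $k$ --- while incurring only the controlled loss $1/(2\theta)$ in density, which is what lets the sharp constant $\tfrac56$ survive. Without that lemma (or a substitute of comparable quantitative precision), the reduction to a finite family does not go through, and the ``finite set of combinatorial types'' you hope to hand to the computer is never produced.
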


Our proof of Theorem~\ref{thm:main} will 
partly mirror that of Theorem~\ref{thm:mainpack}, 
augmented by a new idea.
To highlight the contrast, 
we here outline the proof of Theorem~\ref{thm:mainpack} given in~\cite{Kaw24}. 
A key idea is to extend the problem to the setting
where periods are positive real numbers, not just integers.  
An algorithm is then given \cite[Lemma 4]{Kaw24} that, 
for any integer $\theta \ge 1$, 
converts any packing-unschedulable instance $A$ consisting of positive integers 
into a packing-unschedulable instance $B$
consisting of periods in $\{1, 2, \dots , \theta\} \cup (\theta, 2 \theta]$ and satisfying $\density (B) < \density (A) + 1 / (2 \theta)$. 
It thus suffices to prove that for some integer $\theta \geq 1$, 
\begin{quote}
every instance $B = (b _i) _{i \in [k]}$ such that 
$b _i \in \{1, 2, \dots, \theta\} \cup (\theta, 2 \theta]$ for each $i \in [k]$ and 
$\density (B) < \frac{5}{6} + 1 / (2 \theta)$
is packing-schedulable. 
\end{quote}
This claim holds for $\theta = 11$, as 
was shown by an exhaustive computer analysis \cite[Lemma 5]{Kaw24}. 

\section{Proof}
\label{sec:mainthm}

Before fully showing Theorem~\ref{thm:main}
(Section~\ref{subsection: eliminating period 2}), 
we prove its special case 
where we do not have an agent with period~$2$
(Section \ref{subsection: special case}). 
That is, 

\begin{lemma}\label{lem:mainrelaxed}
Every instance $A$ consisting of integers $> 2$ and satisfying $\density (A) \ge \alpha^*$ 
is schedulable. 
\end{lemma}

\subsection{Instances Without Period 2 (Proof of Lemma~\ref{lem:mainrelaxed})}
\label{subsection: special case}

This part follows the same argument as the proof of Theorem~\ref{thm:mainpack} above. 
We start by extending the problem to the real-valued setting~\cite{KKK25}: 
a nonempty array $(a_i)_{i \in [k]}$ of positive real numbers is \emph{covering-schedulable} (or simply \emph{schedulable}) if 
there exists $S \colon \Zset \to [k]$ such that  
for all $i \in [k]$, 
\begin{quote}
$\left| [m, m+ \lfloor r \cdot a_i \rfloor) \cap S^{-1}(i) \right| \leq r$
for each $m\in \mathbb{Z}$ and $r \in \mathbb{N}$. 
\end{quote}
Note that this condition boils down to the one in Section~\ref{sec:intro} when $a_i$ is an integer. 

The following can be seen as the counterpart of the aforementioned algorithm \cite[Lemma 4]{Kaw24}. 
\begin{lemma}[Kawamura, Kobayashi, and Kusano\mbox{\cite[Lemma 6]{KKK25}}]\label{lem:fold*}
Let $\theta \ge 1$, and let $A$ be an unschedulable instance. 
We can convert $A$ into another unschedulable instance $B$ 
such that 
\begin{itemize}
    \item any element in $B$ with a value $\leq \theta$ is in $A$ as well,
    \item any element in $B$ is at most $2\theta$, and 
    \item $\density (B) \geq \density (A) - 1/\theta$.
\end{itemize}
\end{lemma}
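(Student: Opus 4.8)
The plan is to imitate the role that \cite[Lemma~4]{Kaw24} plays in the packing argument, while coping with an asymmetry of the covering problem: the instance modifications that are manifestly \emph{safe} (in the sense of never turning an unschedulable instance into a schedulable one) are \emph{deleting an agent} and \emph{increasing a period}, and each of these can only lower the density. So I would form $B$ by discarding every agent of $A$ whose period exceeds $2\theta$ and, to avoid losing too much density when $A$ has many such agents, adding back a tightly controlled number of fresh agents of period exactly $2\theta$.

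In detail, put $L=\{\,i:a_i>2\theta\,\}$ and $S=\sum_{i\in L}1/a_i$, let $N=\max\{0,\ \lceil 2\theta(S-1/\theta)\rceil\}$, and let $B$ consist of all agents of $A$ with period $\le 2\theta$ together with $N$ new agents of period $2\theta$ (if this leaves $B$ empty, use instead the single agent of period $2\theta$, which is trivially unschedulable). The first two requirements are immediate: an element of $B$ of value $\le\theta$ must come from the old agents and hence lies in $A$, and every element of $B$ is at most $2\theta$. For the density, $D(B)=D(A)-S+N/(2\theta)$: if $S\le 1/\theta$ then $N=0$ and we lose only $S\le 1/\theta$, while if $S>1/\theta$ then $N/(2\theta)\ge S-1/\theta$; either way $D(B)\ge D(A)-1/\theta$ (and in the degenerate single-agent case $D(B)=1/(2\theta)\ge D(A)-1/\theta$, since there $D(A)=S\le 1/\theta$).

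There remains the claim that $B$ is unschedulable. When $S\le 1/\theta$ nothing new is needed: $B$ is obtained from $A$ purely by deleting agents, and an unused agent never helps, so deleting agents cannot make an unschedulable instance schedulable. When $S>1/\theta$, so that $N\ge 1$ new agents were introduced, I would prove the contrapositive: given a schedule of $B$, build a schedule of $A$ by keeping the common (period-$\le 2\theta$) agents exactly as they are and redistributing the days used by the $N$ new agents among the agents in $L$. Since every $a_i>2\theta$ we have $|L|>2\theta S>N$ and $\sum_{i\in L}1/a_i=S>N/(2\theta)$, so there is a strict surplus of capacity; the task is to partition the occurrence set of each new period-$2\theta$ agent among an appropriate block of $L$ so that agent $i$ receives a set meeting all the real-valued constraints $\bigl|[m,m+\lfloor r a_i\rfloor)\cap(\text{its days})\bigr|\le r$ for every $r\ge 1$.

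That last step is the technical heart and the part I expect to be the real obstacle. A plain density count does not suffice — for example, with $\theta=1$, the period-$2$ set $\{0,2,4,\dots\}$ cannot be split between a period-$3$ and a period-$6$ agent although $\tfrac13+\tfrac16=\tfrac12$ — so the partition of $L$ into blocks (one per new agent) has to be chosen with enough slack that each block can \emph{simulate} a period-$2\theta$ agent, and the splitting within a block must be carried out by a careful greedy left-to-right assignment whose correctness is then verified not only for consecutive occurrences but for every $r$. Making this rigorous is in essence a scaling statement comparing covering-schedulability of the sub-instance $(a_i)_{i\in L}$ with that of its version rescaled into the window $[\theta,2\theta]$; this is exactly where the surplus $S>1/\theta$ and the hypothesis $a_i>2\theta$ enter, and where the bulk of the work lies.
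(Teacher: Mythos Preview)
This lemma is quoted from \cite{KKK25} and not proved in the present paper, so there is no in-paper argument to compare against directly. Your proposal is, by your own account, only a plan: the redistribution step in the case $S>1/\theta$ is flagged as ``the technical heart'' and left undone. That gap is genuine, and the specific mechanism you sketch for closing it --- partitioning $L$ into $N$ blocks, one per new period-$2\theta$ agent, with each block rich enough to simulate such an agent --- does not work in general. With $\theta=1$, let $L$ consist of five agents each of period $\tfrac{100}{49}$; then every $a_i>2=2\theta$, $S=2.45>1$, and $N=\lceil 2S-2\rceil=3$, but every partition of five equal items of size $0.49$ into three nonempty parts leaves a singleton of density $0.49<\tfrac12$, which cannot absorb a period-$2$ occurrence set of asymptotic density~$\tfrac12$. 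So the block decomposition fails already at the partitioning stage, before any greedy assignment is attempted; whatever argument works must treat $L$ globally.

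The natural route --- the covering analogue of \cite[Lemma~4]{Kaw24}, and the same ``folding'' move the paper itself exploits in Section~\ref{sec:computation} --- never introduces fresh period-$2\theta$ agents at all. Instead, while the largest period $a_k$ exceeds $2\theta$, one replaces the two largest periods $a_{k-1}\le a_k$ by a single agent of period $a_k/2$. This preserves unschedulability because a schedule for the folded instance yields one for the original by letting the two removed agents alternate on the merged agent's days: if $T$ obeys the real-valued constraint for period $a_k/2$ then, since $\lfloor r a_k\rfloor=\lfloor (2r)(a_k/2)\rfloor$, each half of the alternating split of $T$ obeys it for period $a_k$, hence also for the smaller $a_{k-1}$. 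Every period created this way exceeds $\theta$, so the first two bullets come for free, and the remaining work is an accounting argument bounding the cumulative density loss by $1/\theta$. The contrast is structural: folding is a \emph{local} two-to-one step with a two-line correctness proof, whereas your plan trades this for a single \emph{global} simulation claim (``the agents in $L$ can jointly act as $N$ agents of period $2\theta$'') that is at least as hard as the lemma itself.
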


The first two claims about $B$ in this lemma imply that 
if $A$ consists of integers $> 2$ and $\theta$ is an integer $> 2$, then each element in $B$ belongs to $\{3, 4, \dots , \theta\} \cup (\theta, 2 \theta]$. 
Therefore, just as in the packing version, 
it suffices to prove  
the following for some integer $\theta > 2$: 
\begin{quote}
every instance $B = (b _i) _{i \in [k]}$ 
such that $b _i \in \{3, 4, \dots, \theta \} \cup (\theta , 2 \theta]$ for each $i \in [k]$ 
and $\density (B) \ge \alpha^* - 1 / \theta$ 
is schedulable.     
\end{quote}
As it turns out, this claim holds for $\theta = 10$. 
In fact, 
the instance $C = (\lceil b_i \rceil) _{i \in [k]}$
obtained by rounding up each period in $B$ is still schedulable, 
as stated in the following lemma.
Note that although $C$ usually has density a bit lower than $B$, 
it satisfies $\density'(C) \ge \density (B) \ge \alpha^* - \frac{1}{10}$, where
\begin{equation*}
\density' \bigl( (c _i) _{i \in [k]} \bigr) = \sum_{i \in [k]} 
\begin{cases}
1 / c _i  & \text{for $c_i \le 10$}, \\ 
1 / (c _i - 1)  & \text{for $c_i > 10$}. 
\end{cases}    
\end{equation*}

\begin{lemma}\label{lem:04}
Every instance $C$ consisting of periods in $\{3, 4, \dots , 20\}$ and satisfying
$\density'(C) \ge \alpha^* - \frac{1}{10}$
is schedulable.
\end{lemma}

This lemma can be verified exhaustively by computer; see Section~\ref{sec:computation}. 
We remark that 
replacing $10$ in Lemma~\ref{lem:04}
by a smaller number, say $9$
(and accordingly replacing $20$ by $18$), 
would make it false.
For example,
the instance $(3, 4, 10, 10, 10, 12, 13, 17)$ is 
unschedulable
(as can be checked by brute force computer search),
even though $\frac 1 3 + \frac 1 4 + \frac{1}{9} + \frac{1}{9} + \frac{1}{9} + \frac{1}{11} + \frac{1}{12} + \frac{1}{16} = \frac{203}{176} > \alpha ^* - \frac 1 9$. 

We have thus proved Lemma~\ref{lem:mainrelaxed}. 
Note that the same approach alone cannot prove Theorem~\ref{thm:main} directly: 
no integer $\theta$ satisfies the above claim about $B$
if $\{3, 4, \dots, \theta\}$ is replaced by $\{2, 3, \dots, \theta\}$, 
as can be seen by the unschedulable (see Section~\ref{sec:intro}) instance 
$(2 ^{i - 1} + 1) _{i \in [k]}$ 
for the largest $k$ with $2^{k-1} + 1 \le 2\theta$.

\subsection{Eliminating Period 2 (Proof of Theorem~\ref{thm:main} Assuming Lemma~\ref{lem:mainrelaxed})}
\label{subsection: eliminating period 2}

Now that we have proved Lemma~\ref{lem:mainrelaxed}, 
we are left with (integral) instances 
that have an agent with period~$2$. 
Luckily, 
such an instance readily reduces to a smaller one: 

\begin{lemma}
\label{lemma: inserting 2}
An instance $(a _i) _{i \in [k]}$ with $a _1 = 2$ is schedulable
if $(\lceil a _{i + 1} / 2 \rceil) _{i \in [k - 1]}$ is. 
\end{lemma}

\begin{proof}
Given a schedule $S' \colon \Zset \to [k - 1]$ for the latter, 
we can obtain a schedule $S \colon \Zset \to [k]$ for the former  
by inserting the period-$2$ agent every second day, 
i.e., by defining $S (t) = 1$ for odd $t$ and $S (t) = S' (t / 2) + 1$ for even $t$. 
\end{proof}

To prove Theorem~\ref{thm:main}, 
we apply this reduction repeatedly,
until we obtain an instance not starting with period $2$---that is, 
either an instance containing period $1$, which is trivially schedulable, 
or one consisting of periods $> 2$, 
whose schedulability is hopefully guaranteed by Lemma~\ref{lem:mainrelaxed}. 
The problem is that this last instance may have density $< \alpha ^*$, 
as the reduction step may cause the density to drop from, say, 
$\density (2, 3, 5, 5, 21) = 1.28\ldots \geq \alpha ^*$ 
to $\density (2, 3, 3, 11) = 1.25\ldots < \alpha ^*$. 
We will argue that this can happen only when we are on the easier track ending up with period~$1$: 

\begin{proof}[Proof of Theorem~\ref{thm:main}]
The \emph{type} of an instance $A = (a _i) _{i \in [k]}$
is defined as the largest (nonnegative) integer $p$ such that 
$a _i \leq 2 ^i$ for all $i \in [p]$.
This instance $A$ is said to be \emph{head-dense} if
$a _i \leq 2 ^{i - 1}$ for some $i \in [p]$, 
and \emph{tail-dense} if 
\begin{equation*}
\sum _{i = p + 1} ^k \frac{1}{a _i} \geq \sum _{i = p + 1} ^\infty \frac{1}{2 ^{i - 1} + 1}. 
\end{equation*}
We prove by induction on $p$ that
\emph{every integral instance $A = (a _i) _{i \in [k]}$ of type $p$ 
that is head-dense or tail-dense is schedulable}.
This implies the theorem, 
because an instance that is neither head-dense nor tail-dense has 
density $< \alpha ^*$. 

Since there is no head-dense instance of type $0$, 
we have already proved the claim for $p = 0$ 
in Lemma~\ref{lem:mainrelaxed}, 
so suppose $p > 0$.
If $a _1 = 1$, then $A$ is trivially schedulable, 
so suppose $a _1 = 2$. 
We claim that the instance $B = (\lceil a _{i + 1} / 2 \rceil) _{i \in [k - 1]}$, 
which has type $p - 1$, 
is also head-dense or tail-dense. 
Once we have proved this, 
$B$ is schedulable by the induction hypothesis, 
and $A$ is also schedulable by Lemma~\ref{lemma: inserting 2}.

If $A$ is head-dense, so is $B$ and we are done. 
Suppose otherwise, so that $A$ is tail-dense. 
For each $i \geq p$, 
the $i$th period $\lceil a _{i + 1} / 2 \rceil$ in $B$ 
is roughly half of the corresponding period $a _{i + 1}$ in $A$: specifically, 
since $a _{i + 1} > 2 ^{p + 1}$, 
we have $\lceil a _{i + 1} / 2 \rceil \leq a _{i + 1} \cdot (2 ^p + 1) / (2 ^{p + 1} + 1)$.
This implies the desired tail-density of $B$ by 
\begin{equation*}
\label{equation: proving tail density by induction}
 \sum _{i = p} ^{k - 1} \frac{1}{\lceil a _{i + 1} / 2 \rceil} 
\geq 
 \frac{2 ^{p + 1} + 1}{2 ^p + 1} \cdot \sum _{i = p + 1} ^k \frac{1}{a _i}
\geq 
 \frac{2 ^{p + 1} + 1}{2 ^p + 1} \cdot \sum _{i = p + 1} ^\infty \frac{1}{2 ^{i - 1} + 1}
\geq 
 \sum _{i = p} ^\infty \frac{1}{2 ^{i - 1} + 1}, 
\end{equation*}
where the last inequality is proved as follows. 
For $i = 1$, $2$, \ldots, let 
\begin{equation*}
 \gamma _i 
=
 (2 ^{i - 1} + 1) \cdot \biggl( \frac{1}{2 ^i + 1} - \frac{1}{2 ^{i + 1} + 1} \biggr)
=
  \frac{2 ^i}{2 ^i + 1}
 \cdot
  \frac{2 ^{i - 1} + 1}{2 ^{i + 1} + 1}. 
\end{equation*}
We have $\gamma _1 \geq \gamma _2 \geq \cdots$, because $
  (\gamma _i - \gamma _{i + 1}) \cdot (2 ^i + 1) \cdot (2 ^{i + 1} + 1) \cdot (2 ^{i + 2} + 1)
 = 
  2 ^i \cdot (2 ^{i - 1} + 1) \cdot (2 ^{i + 2} + 1) - 2 ^{i + 1} \cdot (2 ^i + 1) ^2
 =
  2 ^i \cdot (2 ^{i - 1} - 1)
 \geq 
  0
$. 
Thus, 
for each $i > p$, 
we have $\gamma _p \geq \gamma _i$, or 
\begin{equation*}
  \frac{2 ^p}{2 ^p + 1} \cdot 
  \frac{1}{2 ^{i - 1} + 1} 
 \geq 
  \biggl( \frac{1}{2 ^i + 1} - \frac{1}{2 ^{i + 1} + 1} \biggr) \cdot 
  \frac{2 ^{p + 1} + 1}{2 ^{p - 1} + 1}. 
\end{equation*}
Summing over $i = p + 1$, $p + 2$, \ldots\ 
and adding $\sum _{i = p + 1} ^\infty 1 / (2 ^{i - 1} + 1)$
to both sides, 
we obtain the inequality that we wanted. 
\end{proof}

\section{Computational Techniques (Verifying Lemma~\ref{lem:04})}
\label{sec:computation}

In scheduling an instance $A = (a _i) _{i \in [k]}$,
all that matters for each agent~$i$ on a given day is 
the number of days it has to wait before it can work again, 
which is a nonnegative integer $< a _i$. 
Formally, 
a \emph{state} is a $k$-tuple $(u _i) _{i \in [k]}$
with $u _i \in \{0, \ldots, a _i - 1\}$, 
and there is a transition from state $(u _i) _{i \in [k]}$ to state $(v _i) _{i \in [k]}$ with label $j \in [k]$ 
when $u _j = 0$ and
\begin{equation*}
    v _i = 
    \begin{cases}
        a _i - 1 & \text{if} \ i = j,
    \\
        \max \{0, u _i - 1\} & \text{otherwise}.
    \end{cases}
\end{equation*} 
A schedule is (the sequence of labels of) an infinite walk in this state transition graph
(which has $a _1 \dotsm a _k$ states).
Hence, $A$ is schedulable if and only if this graph has a cycle. 
This allows us to decide schedulability of a given instance 
in a finite, albeit exponential, amount of time. 

We can thus in principle verify Lemma~\ref{lem:04} exhaustively
(note that Lemma~\ref{lem:04} is essentially about 
finitely many instances,
because instances with $\geq 20$ agents with periods $\leq 20$ are 
trivially schedulable). 
Yet, 
doing so in a reasonable amount of time 
calls for nontrivial techniques, 
some of which we describe briefly below. 
To put this challenge in context, 
note that 
prior to Kawamura's proof of the packing version (Theorem~\ref{thm:mainpack}), 
G\k asieniec, Smith, and Wild \cite{GSW22} verified it for up to $12$ agents
using a similar cycle-detecting algorithm, 
where they needed various techniques to speed up the computation. 
We have even more instances, each of which typically has more agents. 

A computer program implementing these ideas, as well as its output in support of Lemma~\ref{lem:04}, 
are available at 
\begin{quote}
    \verb+https://www.kurims.kyoto-u.ac.jp/~kawamura/pinwheel/covering.html+
\end{quote}

\paragraph{Representation of States}
Since we are interested in instances with periods $\leq 20$, 
and the hard ones are those with many (i.e., $13$--$19$) agents, 
they typically have some agents with the same period. 
We may identify states up to permutation of entries
that correspond to agents with the same period. 
This significantly reduces the size of the state transition graph, 
and also often the length of the resulting solution 
(the repeating pattern of the schedule). 
For example, 
the $21$-day cycle of the schedule $S$ in Section~\ref{sec:intro}
for the instance $(3, 5, 5, 5, 7)$ 
can be regarded as a repetition of the $7$-day cycle of 
employing agents with periods $3$, $5$, $5$, $3$, $5$, $7$, $5$ in order, 
with the understanding that the three agents with period $5$ 
always work in a round-robin fashion. 

\paragraph{Reducing the Number of Instances}
We also make some effort 
to get away with fewer instances to check. 
We need not check an instance that has an agent 
without whom the instance still satisfies the modified density bound in Lemma~\ref{lem:04}.
We can also ignore an instance containing period $10$, 
because replacing $10$ by $11$ 
does not affect the modified density. 
That leaves $25\,242\,331$ instances,
still too many to run the exponential-time cycle-detecting algorithm on.

To further reduce the number of instances, 
we use the fact that
a schedule for an instance $C = (c _i) _{i \in [k]}$ can be easily constructed 
from one for the instance $C'$ obtained by 
replacing the last two agents with periods $c _{k - 1}$ and $c _k$ 
by a single agent with period $\min \{c _{k - 1}, \lceil c _k / 2 \rceil\}$.
Since $C'$ has one agent fewer than $C$, 
it is often computationally easier to schedule, 
although of course there is a risk that $C'$ may be unschedulable while $C$ is schedulable.
Thus, in order to prove $C$ schedulable, 
we run the cycle-detecting algorithm in parallel
on instances $C$, $C'$, $C''$, \ldots 
(some of which may not satisfy the bound in Lemma~\ref{lem:04}), 
until one of them turns out schedulable. 
A similar idea was
already used for the packing version \cite[Section 5.2.2]{GSW22}. 

This operation of ``folding'' causes many instances 
to be proved schedulable via a common short instance, 
significantly reducing the number of instances on which 
we actually need to run the cycle-detecting algorithm. 
In our case, 
we ended up running it
only $11\,000$--$12\,000$ times
(this number fluctuates 
depending on which of the parallel searches for schedules finishes first).

\bibliographystyle{abbrv} 
\bibliography{pinwheel}

\end{document}